\newcommand{\bean}{\begin{eqnarray*}}
\newcommand{\eean}{\end{eqnarray*}}
\newcommand{\bea}{\begin{eqnarray}}
\newcommand{\eea}{\end{eqnarray}}
\newcommand{\beaa}{\begin{array}}
\newcommand{\eeaa}{\end{array}}
\newcommandx{\unsure}[2][1=]{\todo[fancyline,backgroundcolor=red!25,bordercolor=red,#1]{#2}}
\newcommandx{\change}[2][1=]{\todo[linecolor=blue,backgroundcolor=blue!25,bordercolor=blue,#1]{#2}}
\newcommandx{\info}[2][1=]{\todo[linecolor=OliveGreen,backgroundcolor=OliveGreen!25,bordercolor=OliveGreen,#1]{#2}}
\newcommandx{\improvement}[2][1=]{\todo[linecolor=Plum,backgroundcolor=Plum!25,bordercolor=Plum,#1]{#2}}
\newcommandx{\thiswillnotshow}[2][1=]{\todo[disable,#1]{#2}}
\newcommand{\cR}{\mathcal{R}}
\newcommand{\cQ}{\mathcal{Q}}
\begin{document}
\title{The Refined Assortment Optimization Problem}

\author{
	Gerardo Berbeglia\footnote{Melbourne Business School, The University of Melbourne, Australia, g.berbeglia@mbs.edu.}
	\and
	Alvaro Flores\footnote{Department of Innovation, HIVERY, Sydney, Australia, Australia, alvaro@hivery.com.}
	\and
	Guillermo Gallego\footnote{Industrial Engineering and Decision Analytics, Hong Kong University of Science and Technology, Kowloon, Hong Kong, ggallego@ust.hks. Supported by RGC project 16211619. }
}
\date{December 23, 2020}
\maketitle

\begin{abstract}

We introduce the refined assortment optimization problem where a firm may decide to make some of its products harder to get instead of making them unavailable as in the traditional assortment optimization problem. Airlines, for example, offer fares with severe restrictions rather than making them unavailable. This is a more subtle way of handling the trade-off between demand induction and demand cannibalization. For the latent class MNL model, a firm that engages in refined assortment optimization can make up to $\min(n,m)$ times more than one that insists on traditional assortment optimization, where $n$ is the number of products and $m$ the number of customer types. Surprisingly, the revenue-ordered assortment heuristic has the same performance guarantees relative to {\em personalized} refined assortment optimization as it does to traditional assortment optimization. Based on this finding, we construct refinements of the revenue-order heuristic and measure their improved performance relative to the revenue-ordered assortment and the optimal traditional assortment optimization problem. We also provide tight bounds on the ratio of the expected revenues for the refined versus the traditional assortment optimization for some well known discrete choice models.

\end{abstract}
\newpage

\section{Introduction}\label{sec:intro}

Discrete choice models are of interest to both industry and academia as they can accurately capture demand substitution patterns that enable firms to offer a better mix of products to consumers, see \cite{thurstone1927law,anderson1992nourl, luce1959, Plackett1975, mcfadden1978modeling, Guadagni1983nourl, mcfadden200mixednourl, ben1985nourl}.  There is a growing body of literature supporting the idea that using discrete choice models  leads to better sales outcomes \citep{talluri2004revenue,Vulcano2010, farias2013nonparametric}.  Parametric discrete choice models yield more accurate estimates than machine learning techniques \cite{feldman2018customer} unless data is abundant and the ground truth model cannot be easily captured by a parametric model  \citep{chen2019use}. Parametric models have the additional advantage that they are amenable to optimization of the firm's objective such as expected sales or expected revenues.

A key problem at the center of e-commerce and revenue management is the traditional assortment optimization problem (TAOP),  which requires finding a subset of products that the firm should offer to maximize expected revenues (or profits), see \cite{talluri2004revenue, KokAssortment,mendez2014branch,Vulcano2010}. The tradeoff is between marginal revenues and demand cannibalization. Indeed, it is optimal to exclude products when their revenue contribution is lower than the revenue losses due to demand cannibalization so the firm can improve profits by redirecting  part of the demand of the excluded products to other more profitable products in the assortment.

In practice, firms may prefer a more subtle approach that make some products harder to get rather than unavailable. This avoids some of the feared cannibalization without completely losing their revenue potential. As an example, a human resource firm in Hong Kong that specializes in placing domestic helpers does not disclose its entire list to customers. Instead, they first offer a list of helpers that have been in the system for a while. Customers that reject the first batch typically get a second list of better qualified helpers. Restricted fares in revenue management are a compromise between offering unrestricted low-fares and not offering them at all. By imposing time-of-purchase and travel restrictions they make these fares less desirable for people who travel for business allowing the airline to obtain revenue from seats that otherwise would fly empty without severely cannibalizing demand for higher-fares. Rolex, makes its steel sport watches extremely difficult to find, steering impatient customers to buy gold watches instead.

The examples above are instances of \emph{refined assortment optimization}, which we will model and study in detail in  this paper.  While our stylized model is general in nature, the exact product modification is context dependent and it can range from temporal delays on delivery or handover of the product, to obscuring or removing some attributes to make some products less desirable. Firms are sometimes forced to display products in a way that makes them less desirable due to a limit on the number of prime display locations. This is known as the product framing problem, see \cite{tversky1989rational}, \cite{feng2007implementing, Craswell_2008nourl,Kempe_2008nourl, aggarwal2008sponsored} and \cite{gallego2020approximation}. Although similar to the refined assortment optimization problem, the crucial difference is that in the product framing problem the {\em refinements} are a consequence of exogenous constraints, whereas these are endogenous decisions in the refined assortment optimization.



\subsection{Contributions}\label{sub:contrib}

This paper is, to our knowledge, the first to study refined assortment optimization.  Our contributions are as follows.


\begin{itemize}
	\item We introduce the \emph{Refined Assortment Optimization Problem} (RAOP) and the \emph{Personalized Assortment Optimization Problem}(p-RAOP) as two continuous optimization problems (Section~\ref{sec:model}).
	\item We show that when consumers follow the latent class multinomial logit (LC-MNL) with $m$ segments and $n$ products, the seller can make up to a factor $\min\{m,n\}$ more with the RAOP than with the TAOP. As a consequence, when consumers follow any random utility model, the seller could make up to $n$ times more by using RAOP instead of TAOP. For the \emph{Random Consideration Set} (RCS) model, the firm can make up to a factor of 2 more with the RAOP relative to the TAOP.  We provide examples to show that these bounds are tight (Section~\ref{sec:monopolist}).
	\item Perhaps surprisingly, we show that the tight revenue guarantees for revenue-ordered assortments against the TAOP obtained in \cite{berbeglia2020assortment} hold verbatim for the RAOP and even for the personalized RAOP where the firm can use a separate refine assortment for each consumer segment (Section~\ref{sec:monopolist}).
	\item We introduce three heuristics for the RAOP which are refinements of the traditional revenue-ordered assortments (Section~\ref{sec:heuristics}). We performed a series of computational experiments using the latent-class MNL (LC-MNL) instances available from the literature. Our numerical results show that these polynomial heuristic often yield higher expected revenues than the optimal solution for the TAOP (which is NP-hard) indicating that there are many situations where firms may benefit from a strategic use of the RAOP where the RAOP identifies the products that need to be refined, and once they are refined the problem can be solved as a TAOP until the prevailing conditions change  and new choice models are fitted at which point the RAOP may need to be called again.
	
\end{itemize}

Besides the main contributions stated above, we introduce the \emph{sequential assortment commitment problem} (SACP) in which a seller wishes to sell items to consumers over a finite time horizon by committing to an assortment schedule. We show that there is a strong connection between this problem and the RAOP. 

\section{Model}\label{sec:model}

A discrete choice model is a function that maps a mean utility vector $u = (u_0, u_1, \ldots,u_n)$ to non-negative numbers $(q_0(u), \ldots, q_n(u))$  such that $\sum_{i = 0}^n q_i(u) = 1$. We interpret $q_i(u)$ as the probability that the consumer selects product $i \in N: = \{1,\ldots,n\}$, and $q_0(u)$ as the probability that the consumer selects the outside alternative. We will assume without loss of generality that $u_0 = 0$, so from now on  when we refer to the vector $u$ we will drop the component corresponding to the outside alternative. A discrete choice model is said to be {\em regular} if $q_i(u)$ is decreasing in $u_k, k \neq i$.  The class of regular discrete choice models includes the class of random utility models (RUM). The representative agent model (RAM) assumes that $q(u)$ is as a solution to $S(u) := \max_{q \in \cQ} [u'q - C(q)]$, where $\cQ$ is the simplex, and $C$ is a function that penalizes the concentration of probabilities \citep{spence1976product,hofbauer2002global}. \cite{feng2017relation} shows that the RAM  contains the class of all RUMs and that $q(u)$ is regular when $S$ is sub-modular.



Let $\{0,1\} \subseteq \Theta_i \subseteq [0,1]$ for all $i \in N$, and let $\Theta : = \Theta_1 \times \ldots \times \Theta_n$. For fixed $u$  let $\tilde{u}_i : = u_i + \ln(x_i)$. For each $x \in \Theta$, let $p(x): = q(\tilde{u}) = q(u + \ln(x))$. Notice that if $q$ is regular then $p_i(x)$ is decreasing in $x_k, k \neq i$. Let $r$ be the unit profit contribution vector. We will refer to $r$ as the vector of revenues,  keeping in mind the broader interpretation as the profit contribution vector.  Let $R(x) := r'p(x)$. The refined assortment optimization problem (RAOP) is given by:
\begin{equation}
	\label{eq:raop}
	\cR(u|\Theta): = \max_{x \in \Theta} R(x).
\end{equation}
The case $\Theta_i = \{0,1\}$ for all $i \in N$ reduces to the traditional assortment optimization problem (TAOP), while $\Theta_i = [0,1]$ for all $i \in N$ is the fully flexible RAOP.  For economy of notation, we write $\cR^*(u)$ as the optimal expected revenue for TAOP and $\bar{\cR}(u)$ for the fully flexible RAOP.  We will write $\cR^*$ and $\bar{\cR}$ except when we need to make the dependence on $u$ explicit.

When there are multiple customer types, the firm's objective is to maximize $R(x) = \sum_{j \in M}\theta_j R_j(x)$ where  $M: = \{1,\ldots,m\}$ is the set of consumer types, and $\theta_j$ is the proportion of type $j$ customers. If the firm can engage in personalized RAOP (p-RAOP) the firm can earn $\bar{\cR}^p = \sum_{j \in M} \theta_j \bar{\cR}_j$ where $\bar{\cR}_j$ is the optimal expected revenue for type $j$ customers. Assume without loss of generality that the products are sorted in decreasing order of the $r_i$s, so $r_1 \geq r_2 \geq \cdots r_n \geq r_{n+1} := 0$.  Let $e_k$ be the $k$th unit vector and for each $i \in N$, define $e^i := \sum_{k \leq i}e_k$. Let $\cR^o := \max_{i \in N} R(e^i)$ be the maximum expected revenue among the class of revenue-ordered assortments.
Clearly
$$\cR^o \leq \cR^*  \leq \cR(u|\Theta) \leq \bar{\cR} \leq \bar{\cR}^p.$$

As we will see, RAOP can significantly improve revenues for the firm, and in some cases it can also increase consumer surplus.

\begin{example}
	Consider a firm that has two products to offer with profit contributions $r_1=3.5, r_2 = 1$. There are two customer types, each following a maximum utility model.  Type 1 customers have utility vector $u^1 = (1,5, 1.6)$, while Type  2 customers  have utility vector $u^2=(-1, 1.6)$. Let $\theta_1 = .3, \theta_2 = .7$ be the market shares of the two customer types. The optimal solution for the TAOP is $x = (1,0)$, resulting in $p(x) = (\theta_1, 0)$,  $\cR^* = 3.5\times 0.3 =1.05$, and expected consumer surplus $0.45$, with  type 2 customers left out. The RAOP with $\Theta_1 = \{0,1\}$ and $\Theta_2 = \{0, 0.8, 1\}$ results in $x = (1,0.8)$, $p(x) = (\theta_1, \theta_2)$, $\bar{\cR} = 1.75$, and expected consumer surplus $1.41$. This represents a 66.6\% improvement in profits for the firm and a 214.7\% increase in consumer surplus. All of the gains come from type 2 customers that buy product 2 at $x_2 = 0.8$, a level that does not cannibalize the demand for product 1 from type 1 customers.
\end{example}

\begin{example}
	Suppose a firm has three products to offer with profit contributions $r_1=100, r_2 = 65, r_3=58$. Consumers follow a latent class MNL model with 2 segments of equal weight (i.e. $\theta_1=\theta_2=0.5$). Let $u^1 = (0.01,100,0.1)$ and $u^2 = (100,1000,0.1)$ denote the mean utility vector for segments 1 and 2 respectively. The optimal solution for the TAOP is $x = (1,1,0)$,  $p(x) ~= (0.045, 0.95, 0)$,  and $\cR^* ~= 66.24$. The RAOP with $\Theta_1 = \Theta_3 = \{0,1\}$ and $\Theta_2 = [0,1]$. results in $x = (1,0.06,1)$, $p(x) ~= (0.311, 0.608, 0.007)$, and $\bar{\cR} ~= 71.06$. This represents an improvement of over 7\% in profits for the firm. Note that product 3 was added in full under the RAOP. The RAOP may refine products taken in full by the TAOP and fully add products rejected by the TAOP.
\end{example}


Notice that the flexibility afforded by the RAOP may have an impact in unit costs as a result in the change in utility. In many applications we expect the change in unit costs to be negligible. This is often true when modifying services, where imposing purchase or travel restrictions have zero or close to zero marginal costs. For applications where the change in unit costs are not negligible, we assume that any changes in costs are passed to the consumer so that the unit revenues (profit contributions) are unchanged.  More precisely, if the firm modifies the utility of product $i$ by $\delta u_i$  at a cost of $\delta c_i$, we assume that the firm adjusts prices by $\delta c_i$, resulting in $\delta r_i = 0$. Then $\tilde{u}_i = u_i + \delta u_i - \beta_i \delta c_i$, where $\beta_i$ is the price-sensitivity of product $i$. The price adjustment $\delta c_i$ may be suboptimal, but it is a reasonable approximation if the passthrough rate is close to one and/or $\delta c_i/c_i$ is small. As  TAOP, the RAOP is a lower bound on what the firm can make if it had more pricing freedom.

We end this section by introducing the \emph{Sequential Assortment Commitment Problem} (SACP) as a special case of the RAOP. In the SACP a firm commits to sequentially offer, possibly empty, assortments $A_s \subset N$, $s \in T: = \{1,\ldots,t\}$. The objective of the firm is to maximize expected profits from consumers who follow a discrete choice model and have mean valuations $u_{is}, i \in N, s \in T$ with $u_{is}$ decreasing in $s$. Unlike recent sequential assortment problems proposed where consumers are impatient and buy the first satisfying product they find  \citep{chen2019sequential,fata2019multi,flores2019assortment,liu2020assortment}, the SACP considers forward-looking consumers who optimally time their purchase to maximize their utility. The SACP is a special case of the RAOP as the firm can solve a RAOP with $\Theta_i = \{u_{is}, s \in T\}~~\forall ~~i \in N$ to obtain a solution for the SACP. 
\section{Tight bounds}\label{sec:monopolist}
The purpose of this section is two fold. First, we will show that the well-known revenue-ordered assortments heuristic possesses the same revenue guarantees for the RAOP as for the TAOP. Surprisingly, these revenue guarantees (which are already tight under the TAOP) also apply to the \emph{personalized} version of the RAOP (p-RAOP) in which the firm can customize the refine assortments to each of the consumer segments. Second, we present tight bounds on how much more revenue the firm can earn under the RAOP with respect to the TAOP for regular choice models, the MNL, the LC-MNL, and the Random Consideration Set model.

\subsection{Revenue Ordered Assortments for the RAOP: Performance Guarantees}
Before showing a tight bound on the performance guarantees of the revenue-ordered assortment heuristic, we need the following technical lemma.
\begin{lemma} \label{lem:reg_sales}
	Let $x \in [0,1]^n$ and $y \in \{0,1\}^n$. If $p$ is regular, then $y'p(x) \leq y'p(y)$.
\end{lemma}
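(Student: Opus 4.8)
The plan is to read both sides combinatorially and then deform $x$ into $y$ in two monotone stages, invoking regularity at each stage. Write $S := \{i \in N : y_i = 1\}$ for the support of the binary vector $y$, so that $y'p(x) = \sum_{i \in S} p_i(x)$ and $y'p(y) = \sum_{i \in S} p_i(y)$. The goal is therefore to establish $\sum_{i \in S} p_i(x) \leq \sum_{i \in S} p_i(y)$.

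First I would remove every product outside $S$. Let $x^{(1)}$ agree with $x$ on the coordinates in $S$ and equal $0$ on the coordinates outside $S$. Passing from $x$ to $x^{(1)}$ only lowers the coordinates $x_k$ with $k \notin S$. Since $p$ is regular, the excerpt already records that $p_i(x)$ is decreasing in $x_k$ for every $k \neq i$; as each $i \in S$ satisfies $i \neq k$ for all $k \notin S$, lowering those coordinates can only (weakly) raise each $p_i$ with $i \in S$. Summing over $i \in S$ gives $\sum_{i \in S} p_i(x) \leq \sum_{i \in S} p_i(x^{(1)})$.

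Second I would raise the surviving coordinates to one. Both $x^{(1)}$ and $y$ are supported on $S$, so the products outside $S$ have been removed (their shifted utilities $\tilde{u}_i = u_i + \ln(x_i)$ equal $-\infty$) and contribute zero probability; consequently $\sum_{i \in S} p_i(x^{(1)}) = 1 - p_0(x^{(1)})$ and likewise $\sum_{i \in S} p_i(y) = 1 - p_0(y)$. Passing from $x^{(1)}$ to $y$ only increases the coordinates $x_i$ with $i \in S$ (from $x_i$ up to $1$), hence only increases the shifted utilities $\tilde{u}_i$ for $i \in S$. Regularity applied to the no-purchase coordinate, namely that $q_0$ is decreasing in $u_i$ for each $i \neq 0$, then forces $p_0(y) \leq p_0(x^{(1)})$, so that $\sum_{i \in S} p_i(x^{(1)}) \leq \sum_{i \in S} p_i(y)$. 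Chaining the two inequalities yields $y'p(x) \leq y'p(y)$.

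The one delicate point, and the step I expect to need the most care, is the second one: regularity as literally stated controls only the cross-effects of $p_i$ against $x_k$ with $k \neq i$, and raising $x_i$ \emph{decreases} the other $p_{i'}$ with $i' \in S$, so the sign of the change in $\sum_{i \in S} p_i$ is not immediate coordinate by coordinate. The maneuver that resolves this is to collapse the sum through the identity $\sum_{i \in S} p_i = 1 - p_0$, valid once the support is confined to $S$, which converts the claim into the single monotonicity statement that $p_0$ decreases when an in-support utility rises; I would make explicit that this is precisely regularity of the outside-alternative coordinate. The degenerate cases, $S = \emptyset$ (both sides vanish) and $x_i = 0$ for some $i \in S$ (so $\tilde{u}_i = -\infty$ initially), require no change, since all the monotonicity arguments tolerate weak inequalities and limiting utilities.
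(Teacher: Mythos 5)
Your proof is correct and is essentially the paper's own argument: your intermediate point $x^{(1)}$ is exactly the componentwise minimum $\min(x,y)$ used in the paper, and your two stages (lowering off-support coordinates via the cross-effect form of regularity, then collapsing $\sum_{i \in S} p_i = 1 - p_0$ and using that $p_0$ decreases when in-support utilities rise) match the paper's chain $y'p(x) \leq y'p(\min(x,y)) = P(\min(x,y)) \leq P(y) = y'p(y)$ step for step.
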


\begin{proof}
Let $P(x) := \sum_{i \in N} p_i(x) = 1 - p_0(x)$. By regularity, $p_0(x)$ is decreasing in $x$, implying that $P(x)$ is increasing in $x$.  Then
$$y'p(\min(x,y)) = P(\min(x,y)) \leq P(y) = y'p(y),$$
so it is enough to show that $y'p(x) \leq y'p(\min(x,y))$ but this follows from regularity as $p_i(x) \leq p_i(\min(x,y))$ for all $i$ such that $y_i = 1$.
\end{proof}

\begin{theorem}
	\label{thm:upr}
	If $p^j$ is a regular discrete choice model for all $j \in M$, then  $\cR^* \leq \bar{\cR} \leq \bar{\cR}^p \leq \omega_n \cR^o \leq \omega_n \cR^*$,
	where $\omega_n :=n - (n-1)\alpha^{1/(n-1)}$ and $\alpha = r_n/r_1$.  Moreover,  $\omega_n$ is increasing in $n$, with $\omega_n \rightarrow 1 - \ln(\alpha)$ as $n \rightarrow \infty$.
\end{theorem}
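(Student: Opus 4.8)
The plan is to establish the central inequality $\bar{\cR}^p \le \omega_n \cR^o$; the remaining links of the chain are essentially free. The bounds $\cR^* \le \bar{\cR} \le \bar{\cR}^p$ are the nesting inequalities already recorded before the theorem (TAOP is the $\Theta_i=\{0,1\}$ restriction of the fully flexible RAOP, and using a common refined assortment is a special case of a personalized one). For the last link $\omega_n \cR^o \le \omega_n \cR^*$, I would note that $\cR^o \le \cR^*$ (a revenue-ordered assortment is a feasible TAOP solution) together with $\omega_n \ge 1$, which holds because $\alpha = r_n/r_1 \le 1$ forces $\alpha^{1/(n-1)} \le 1$ and hence $\omega_n = n-(n-1)\alpha^{1/(n-1)} \ge 1$.

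For the crux I would argue segment by segment and then aggregate. Fix a segment $j$ and let $x^{*j}$ attain $\bar{\cR}_j = r'p^j(x^{*j})$. Abel summation, using $r_1\ge\cdots\ge r_n\ge r_{n+1}=0$, rewrites $r'p^j(x^{*j}) = \sum_{i=1}^n (r_i-r_{i+1})\,(e^i)'p^j(x^{*j})$. Since each $p^j$ is regular and $e^i\in\{0,1\}^n$, Lemma~\ref{lem:reg_sales} gives $(e^i)'p^j(x^{*j}) \le (e^i)'p^j(e^i)$, so $\bar{\cR}_j \le \sum_{i=1}^n (r_i-r_{i+1})(e^i)'p^j(e^i)$. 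Averaging over segments with weights $\theta_j$ and setting $Q_i := (e^i)'p(e^i) = \sum_j\theta_j (e^i)'p^j(e^i)$ yields $\bar{\cR}^p \le \sum_{i=1}^n (r_i-r_{i+1}) Q_i$. The point of the aggregation is that $Q_i$ now obeys exactly the two constraints present in the non-personalized problem: $Q_i\le 1$, and, because $r_k\ge r_i$ for $k\le i$, $r_i Q_i \le R(e^i)\le \cR^o$. Hence $Q_i\le\min(1,\cR^o/r_i)$ and $\bar{\cR}^p \le \sum_{i=1}^n (r_i-r_{i+1})\min(1,\cR^o/r_i)$.

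It then remains to show $f(t):=\sum_{i=1}^n (r_i-r_{i+1})\min(1,t/r_i) \le \omega_n t$ at $t=\cR^o$. I would study $g(t):=f(t)/t$. Splitting the sum at the largest index $j$ with $r_j>t$ and telescoping the saturated tail gives $g(t)=\sum_{i=1}^{j}\big(1-r_{i+1}/r_i\big)+r_{j+1}/t$, which is continuous and non-increasing in $t$ (it is decreasing on each interval where $j$ is constant, and the pieces match at the breakpoints). Thus $g$ is maximized as $t\to 0^+$, where it equals $\sum_{i=1}^n (1-r_{i+1}/r_i) = n-\sum_{i=1}^{n-1} r_{i+1}/r_i$ (the $i=n$ term vanishes since $r_{n+1}=0$). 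Applying AM--GM, $\sum_{i=1}^{n-1} r_{i+1}/r_i \ge (n-1)\big(\prod_{i=1}^{n-1} r_{i+1}/r_i\big)^{1/(n-1)} = (n-1)\alpha^{1/(n-1)}$, bounds this by $\omega_n$, completing $\bar{\cR}^p\le\omega_n\cR^o$.

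Finally, for the monotonicity and limit I would write $\omega_n = 1 + k\big(1-\alpha^{1/k}\big)$ with $k=n-1$ and set $a=-\ln\alpha\ge 0$. With $s=1/k$, the bracketed term is $\phi(s)=(1-e^{-as})/s$; differentiating, its numerator is $e^{-as}(1+as)-1\le 0$ by the inequality $1+x\le e^x$, so $\phi$ decreases in $s$, i.e.\ $\omega_n$ increases in $k$ and hence in $n$. The limit follows from $\alpha^{1/k}=1+(\ln\alpha)/k+o(1/k)$, giving $k(1-\alpha^{1/k})\to-\ln\alpha$ and $\omega_n\to 1-\ln\alpha$. The main obstacle is the step that makes personalization free: one must see that after applying Lemma~\ref{lem:reg_sales} segment-wise, the $\theta_j$-weighted aggregate sales $Q_i$ inherit precisely the constraints $Q_i\le\min(1,\cR^o/r_i)$ that drive the non-personalized bound, so the worst case over p-RAOP is no worse than over the aggregate model.
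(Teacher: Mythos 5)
Your proof is correct and follows essentially the same route as the paper's: Abel summation of $r'p^j(x^{*j})$, Lemma~\ref{lem:reg_sales} applied segment-wise with $y=e^i$, aggregation over segments with weights $\theta_j$, the ordering trick $r_iQ_i\le R(e^i)\le\cR^o$, and the extremal computation over revenue vectors (which your AM--GM argument makes explicit, matching the paper's claimed maximizer $r_k=r_1\alpha^{(k-1)/(n-1)}$). Your detour through $\min(1,\cR^o/r_i)$ and the function $g(t)$, and your explicit proofs of the monotonicity and limit of $\omega_n$, only add detail to steps the paper asserts or leaves to the reader.
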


\begin{proof}
The only non-trivial inequality is  $\bar{\cR}^p \leq \omega_n\cR^o$. Let $x^j\in [0,1]^n$ be an optimal solution of RAOP for segment $j$, and let $p^j_k(x)$ denote the probability that customer $j$ selects product $k$ under the refined assortment $x^j$.  Then

\bean
\bar{\cR}^p & = & \sum_{j \in M} \theta_j  \bar{\cR}_j \\
& = &  \sum_{i \in N} r_i \sum_{j \in M} \theta_j p^j_i(x^j) = \sum_{i \in N}(r_i - r_{i + 1})  \sum_{j \in M} \theta_j \sum_{k \leq i } p^j_k(x^j)\\
& \leq &  \sum_{i \in N}(r_i - r_{i + 1}) \sum_{j \in M} \theta_j \sum_{k \leq i } p^j_k(e^i)\\
& \leq &\sum_{i \in N}\frac{r_i - r_{i + 1}}{r_i} \sum_{j \in M} \theta_j  \sum_{k \leq i } r_kp^j_k(e^i)\\
& = &\sum_{i \in N}\frac{r_i - r_{i + 1}}{r_i} R(e^i) \leq  \sum_{i \in N}\frac{r_i - r_{i + 1}}{r_i} \cR^o\\
& \leq  &\omega_n \cR^o\\
\eean

%

The first inequality follows from Lemma~\ref{lem:reg_sales}. The second from the ordering of the $r_i$s and the third one from $R(e^i) \leq \cR^o$. The last inequality follows from maximizing $\sum_{i \in N}(r_i - r_{i+1})/r_i$ subject to $r_{n+1} = 0 < r_n \leq \ldots \leq r_1$ subject to $r_n/r_1 = \alpha$. This yields $r_k = r_1\alpha^{(k-1)/(n-1)}$ resulting in $\omega_n = \sum_{i \in N}(r_i - r_{i+1})/r_i = n - (n-1)\alpha^{1/(n-1)} = 1 + (n-1)[1- \alpha^{1/(n-1)}]$ as claimed. We leave it to the reader to verify that the term $\omega_n -1 = (n-1)[1 - \alpha^{1/(n-1)}]$ increases with $n$ and converges to $-\ln(\alpha)$.
\end{proof}

Theorem~\ref{thm:upr} extends  results by \cite{berbeglia2020assortment} from the TAOP to the p-RAOP. In the proof of Theorem~\ref{thm:upr} we allowed the $r_i$s to be all different. If the vector $r$ has only $k < n$ different values, the bound can be improved to $\omega_k = k - (k-1)\alpha^{1/(k-1)}$ with $\alpha$ unchanged  resulting in a tighter bound.  Similar arguments can be used to show that $ \bar{\cR}^p \leq \eta \cR^o$ where $\eta = (1 + \ln(Q_n/Q_1))$, $Q_1$ is the probability that product 1 sells under the optimal p-RAOP, and $Q_n$ is the probability that something sells under the same policy. All of these bounds were shown to be exactly tight relative to the TAOP so they are automatically tight for the  RAOP and for the p-RAOP.\\

\subsection{Bounds for Models that Satisfy the Monotone Utility Property}\label{sub:bounds_monotone}

We say that $R$  satisfies the {\em monotone-utility} property if $\cR^*(u) = \cR(u|\{0,1\}^n)$ is increasing in $u$.  This implies that $\bar{\cR} = \cR^*$, so refined assortment optimization is not useful when the  monotone-utility property holds.
\begin{lemmarep}
	\label{lem:mnlmup}The MNL satisfies the monotone-utility property.                                         
\end{lemmarep}

\begin{proof}
We need to show that if $x \in \{0,1\}^n$ is an optimal assortment, then $\cR^*(u) = \cR^*(u,x)$ is increasing in $u$. Set $v_i = \exp(u_i), i \in N$.  The choice probabilities can be written as $p_i(v,x) := v_ix_i/[v_0 + v'x]$
and the optimal expected revenue can be written as $\cR^*(v) := \sum_{i \in N} r_i p_i(v,x)$. Elementary calculus yields $\partial \cR^*(v)/\partial v_i = p_i(v,x)[r_i - \cR^*(v)]$ for all $i$ such that $x_i = 1$. Now  $r_i \geq \cR^*(v)$ for all $i$ such that $x_i = 1$ as otherwise it is optimal to eliminate $i$ from the assortment.  Thus $\cR^*(v)$ and therefore $\cR^*(u)$ is increasing in $u_i$ for all $i$ such that $x_i = 1$. If $x_i = 0$ then $r_i < \cR^*(v)$ and $\partial \cR^*(v)/\partial v_i = 0$ completing the proof.
\end{proof}


This implies that TAOP and RAOP will yield the same result for the MNL. Fortunately, $R (x)= \sum_{j \in M}\theta_j R_j(x)$ does not inherit the monotone-utility property from the $R_j, j \in M$. The next theorem gives an upper bound on $\bar{\cR}$, and the Proposition below shows that it is tight.

\begin{theorem}\label{th:mixed_m}
	If $R_j$ satisfies the monotone-utility property for all $j \in M$, then $\bar{\cR} \leq m\cR^*$.
\end{theorem}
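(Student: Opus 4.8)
The plan is to reduce the aggregate inequality to a per-segment statement and then pay a factor of $m$ purely for aggregation. First I would record the single-segment consequence of the hypothesis: since each $R_j$ satisfies the monotone-utility property, the discussion preceding the theorem applies verbatim to segment $j$ and gives $\bar{\cR}_j = \cR_j^*$ for every $j \in M$, i.e. refinement is useless for any individual segment (intuitively, refining only lowers utilities, and monotone utility says lowering utilities cannot raise the optimal traditional revenue). Because personalized refinement dominates a common refinement, I would then invoke the chain $\bar{\cR} \leq \bar{\cR}^p = \sum_{j \in M}\theta_j \bar{\cR}_j = \sum_{j \in M}\theta_j \cR_j^*$, so it suffices to prove the purely TAOP inequality $\sum_{j\in M}\theta_j \cR_j^* \leq m\,\cR^*$.

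The key step, and the one carrying the factor $m$, is to bound each segment's standalone optimum by the aggregate TAOP optimum. Let $x^j \in \{0,1\}^n$ be an optimal traditional assortment for segment $j$, so $\cR_j^* = R_j(x^j)$. Offering $x^k$ to the whole market is feasible for the aggregate TAOP, hence for each fixed $k \in M$
\begin{equation*}
\cR^* \;\geq\; \sum_{j \in M}\theta_j R_j(x^k) \;\geq\; \theta_k R_k(x^k) \;=\; \theta_k \cR_k^*,
\end{equation*}
where the second inequality drops the remaining terms, all nonnegative because $r \geq 0$ (recall $r_n \geq r_{n+1}=0$) and $p^j$ is a probability vector. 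Since this holds for every $k$, summing the $m$ inequalities yields $\sum_{k \in M}\theta_k \cR_k^* \leq m\,\cR^*$. Combining the two displays gives $\bar{\cR} \leq \sum_{j\in M}\theta_j \cR_j^* \leq m\,\cR^*$, as claimed.

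I expect the only genuine subtlety to be conceptual rather than computational: the monotone-utility property is used \emph{segment by segment} (so that per-segment refinement gains nothing), whereas the mixture $R = \sum_j \theta_j R_j$ need not inherit it — this is precisely the point emphasized after Lemma~\ref{lem:mnlmup}, and it is why the bound is not automatically $\bar{\cR}=\cR^*$. The aggregation loss is then elementary once one notices that nonnegativity of revenues lets a single segment's optimal assortment serve as a feasible aggregate solution, which is exactly what forces the factor $m$ (attained, in the worst case, when the segments' optimal assortments are essentially disjoint).
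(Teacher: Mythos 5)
Your proof is correct and takes essentially the same route as the paper's: the paper's first two inequalities ($R_j(x^*) \leq \cR^*_j(\tilde{u}^j) \leq \cR^*_j(u^j)$) are exactly the content of your reduction $\bar{\cR} \leq \bar{\cR}^p = \sum_{j \in M}\theta_j \cR^*_j$ via monotone utility killing per-segment refinement gains, and its final step $m\max_{j \in M}\theta_j \cR^*_j(u^j) \leq m\cR^*$ is the same feasibility-plus-nonnegativity aggregation you carry out by summing over $k$. The only difference is presentational: you route through the p-RAOP bound and explicitly justify $\theta_k \cR^*_k \leq \cR^*$, a step the paper merely asserts.
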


\begin{proof}
Let $x^*$ be an optimal solution for the RAOP and let $\tilde{u}^j = u^j + \ln(x^*)$ for all $j \in M$. Then
\bean
\bar{\cR}(u) & = & R(x^*) \\
& = & \sum_{j \in M} \theta_j R_j(x^*) \leq  \sum_{j \in M} \theta_j \cR^*_j(\tilde{u}^j)\\
& \leq & \sum_{j \in M} \theta_j \cR^*_j(u^j) \leq  m \max_{j \in M} \theta_j \cR^*_j(u^j)\\
& \leq & m \cR^*.
\eean
The first inequality follows by solving a RAOP for each segment starting from $\tilde{u}^j = u_j + \ln(x)$. The second from the assumption that $\cR^*_j(\cdot)$ is an increasing function. The third is by picking the best segment and the fourth from $\max_{j \in M}\theta_j \cR^*_j(u) \leq \cR^*(u)$.
\end{proof}

By Lemma~\ref{lem:mnlmup}, Theorem~\ref{th:mixed_m} applies to the LC-MNL. The next result shows that the bound is arbitrarily tight for the LC-MNL. Since the LC-MNL is regular, we see that the bound from Theorem~1 is arbitrarily close by setting $m = n$.

\begin{propositionrep}\label{prop:tight_bound_LC-MNL}
	For every $n$ and every $m$, there exists a LC-MNL instance with $n$ products and $m$ segments such that $\bar{\cR}$ is arbitrarily close to $\min\{m,n\}\cR^*$.
\end{propositionrep}

\begin{proof}
We begin with the construction of a latent class MNL instance with $k\leq n$ consumer segments. Consider $\gamma<1$ and $\epsilon_1<1$ to be two small positive numbers. We set the nominal utility of product $i$ for consumer type $j$ to be $u^j_{i} = i(1+\epsilon_1)\ln(\gamma) + \beta_{ij} + \varepsilon_{ij}$ where $i(1+\epsilon_1)\ln(\gamma)$ is a segment independent utility component for product $i$, $\beta_{ij}$ is a segment dependent utility component; and $\varepsilon_{ij}$ are independent Gumbel random variables. For $j \geq i$ we set $\beta_{ij} = (j-i)\cdot \ln(\gamma)$. If $j<i$, $\beta_{ij}= M\ln(\gamma)$ where $M=\gamma^{-1}$ is a large positive number. The no-purchase alternative 0 has utility $u_0 = 3n\ln(\gamma)$.

Observe that when $\gamma$ is small enough, the product nominal utilities for consumer segment $j$ satisfy: $u^j_1 > u^j_2 > \hdots > u^j_j > u_0 > \Theta(M)\ln(\gamma)$ and $u^j_{h} = \Theta(M)\ln(\gamma)$ for all $h=j+1,\hdots,n$.

The revenue of product $i$ is set to $r_i = \frac{1}{\epsilon^{i-1}}$ for some $0<\epsilon<1$. Thus, the revenue increase as a function of the product index with $r_1=1$, $r_2=\epsilon^{-1}$, $r_3=\epsilon^{-2}$, etc.

Customer segment $j$ with $j<k$ has a probability mass of $\lambda_j = \epsilon^{j-1} - \epsilon^j$. Observe that $\sum_{j=1}^{k-1} \lambda_j = 1 - \epsilon + \epsilon - \epsilon^2 + \epsilon^2 - \epsilon^3 + \hdots + \epsilon^{k-2} - \epsilon^{k-1} = 1 - \epsilon^{k-1}$. Thus, $\epsilon^{k-1}$ is the probability mass of customer segment $k$.

Suppose we offer an assortment $S$ without refining the utility of the products. Let $\ell(S) = \min \{i: i \in S\}$ denote the product with smallest index in $S$. Consider now consumer segment $j$. When $\ell(S) \leq j$, we have that the probability that consumer type $j$ buys $\ell(S)$ when $\gamma$ tends to zero is:

$$\lim_{\gamma \to 0+}\mathcal{P}_j(\ell(S),S)\geq \lim_{\gamma \to 0+} \frac{\gamma^{j+\epsilon_1\ell(S)}}{\sum_{i=\ell(S)}^j\gamma^{j+\epsilon_1i} + \sum_{i=j+1}^n \gamma^{i(1+\epsilon_1)+\gamma^{-1}} + \gamma^{3n}}=1.$$

Above, we use the fact that $\mathcal{P}_j(i,S) \geq \mathcal{P}_j(i,\{1,\hdots,n\})$ and that to calculate the limit in the right we only need to determine the term that has the smallest exponent.
When $\ell(S) > j$, the probability that consumer type $j$ buys nothing when $\gamma$ and $\epsilon_1$ tends to zero is
$$\lim_{\gamma \to 0,\epsilon_1 \to 0+} \mathcal{P}_j(0,S) \geq \lim_{\gamma \to 0,\epsilon_1 \to 0+} \mathcal{P}_j(0,\{j+1,\hdots,n\}) = \lim_{\gamma \to 0} \frac{\gamma^{3n}}{\sum_{i=j+1}^n \gamma^{\gamma^{-1}} + \gamma^{3n}}=1.$$

Therefore in the limit, for any non-empty assortment $S \subseteq N$, $\ell(S)$ is the only product that has a non-zero probability of being purchased. Moreover, the probability of purchasing $\ell:=\ell(S)$ is $\sum_{j=\ell}^k \lambda_j = \epsilon^{\ell-1} - \epsilon^{\ell} + \epsilon^{\ell} - \epsilon^{\ell+1} + \hdots + \epsilon^{k-1}= \epsilon^{\ell-1}$. Thus, any assortment achieves the optimal revenue of $R^*= (\epsilon^{\ell(S)-1})\cdot \frac{1}{\epsilon^{\ell(S)-1}} = 1$.

Suppose now that the firm refines the segment-independent component of the product's $i$ utility from $i(1+\epsilon_1)\ln(\gamma)$ to $n\ln(\gamma)$ for every $i=1,\hdots,n$. Now the nominal utility for the customer segment $j$ becomes $\tilde{u}^j_i = (n+j-i)\ln(\gamma)$ if $j\geq i$. If $j<i$, the nominal utility for the customer segment $j$ becomes $\tilde{u}^j_i = (n+M)\ln(\gamma)$.

Thus, the nominal utilities for consumer segment $j$ now satisfy $\tilde{u}^j_j > \tilde{u}^j_{j-1} > \hdots > \tilde{u}^j_{1} > 0 > \tilde{u}^j_{j+1}= \hdots ~= \tilde{u}^j_{n} ~= \Theta(M)\ln(\gamma)$. In the limit when $\gamma \to 0$, by offering an assortment $S$ with $j \in S$ the probability that a consumer from segment $j$ buys product $j$ is:

$$\lim_{\gamma \to 0} \mathcal{P}_j(j,S) \geq \lim_{\gamma \to 0} \mathcal{P}_j(j,N) = \frac{\gamma^n}{\sum_{i=1}^j \gamma^{n+j-i} + \sum_{i=j+1}^n \gamma^{n+M} + \gamma^{3n}}=1.$$

As a result, if the firm offers the assortment $S=\{1,\hdots,k\}$ it can obtain
$$\sum_{j=1}^{k} \lambda_j \cdot r_j = \sum_{j=1}^{k-1} \frac{\epsilon^{j-1}-\epsilon^{j}}{\epsilon^{j-1}} + \frac{\epsilon^{k-1}}{\epsilon^{k-1}}=(k-1)(1- \epsilon) + 1 \geq k(1- \epsilon)R^*.$$

When $m\leq n$, setting $k=m$ shows that the bound is tight. If $m > n$, the bound is also tight since one can construct an LC-MNL instance with $m$ segments where the first $k=n$ segments are those in the above construction and the remaining $m-k$ segments are assigned a zero-weight.

\end{proof}
Since all RUM can be approximated by the LC-MNL our results imply that over that class of discrete choice models, the bound $\hat{\cR} \leq m \cR^*$ is arbitrarily close for all $m \leq n$. Since the MNL is a regular model, the bound for the previous section applies, so we conclude that for all RUMs, the bound $\hat{\cR} \leq \min(m,n) \cR^*$ is arbitrarily close. It is worth to mention that similarly to the TAOP, the RAOP under the LC-MNL is NP-hard \citep{desir2020} \footnote{In their study of the TAOP under the LC-MNL, \cite{desir2020} considered its continuous relaxation and showed that there is no efficient algorithm that can achieve an approximation factor guarantee of at least $O(\frac{1}{m^{1-\delta}})$ for any constant $\delta>0$ unless $NP \subset BPP$.}.

\subsection{Bounds for the Random Consideration Set Model}\label{sub:bounds_rcs}

We now provide tight bounds on the benefits of using RAOP with respect to TAOP when consumers follow the random consideration set (RCS) model introduced by \citet{manzini2014stochastic}. In this model, all consumers have the same preference ordering $1  \prec 2 \prec \ldots \prec n$, and product $i$ has attention probability $\lambda_i \in [0,1]$. If assortment $S$ is offered, then product $i$ is selected with probability
$$q_i(S) = \lambda_i \Pi_{j > i} (1-\lambda_j\delta(j \in S)),$$
where $\delta(j \in S) = 1$ if $j \in S$ and $0$ otherwise, and
products over empty sets are defined as 1.
\cite{gallego2017attention} showed that the RCS can be approximated exactly by a the Markov Chain model of \citet{blanchet2016markov} and developed an algorithm to find an optimal assortment that runs in $O(n)$ time. Here we consider the refined assortment optimization version of the RCS model. A reduction in the utility of product $i$ can result both in a reduction of the attention probability $\lambda_i$ and may bring product $i$ lower in the preference ordering. Those two issues need to be considered carefully in solving the RAOP for the RCS model. We show that the RAOP can at most double expected revenues relative to the TAOP under the mild assumption that attention probabilities are non-decreasing in the products utilities. This means that the RAOP under the RCS model has a 2-factor approximation algorithm.

\begin{theoremrep}
	\label{thm_RCS_factor_2}
	For every RCS, $\bar{\cR}\leq 2 \cR^*$.
\end{theoremrep}

\begin{proof}

We omit the proof that the best and the worse preference orderings are, respectively, in the order and in the reverse order of the revenues, and a lemma that shows that the revenue of the RCS is monotonic on the attention probabilities. The details are available from the authors upon request. We next proceed to the more difficult part of the result that bounds the expected revenue with the best order in terms of the expected revenue of the worst order.


Assume, that $r_i, i = 1,2, \ldots,n$ is a non-decreasing sequence of non-negative real numbers and let $\lambda_i, i = 1, 2, \ldots,n$ be an arbitrary sequence of attention probabilities in $(0,1)$. Suppose first, that the preference order is increasing in the index of the product: $1 \prec 2 \prec \hdots, \prec n$. Let $H_0 := 0$ and for $k \geq 1$,  let $H_k := H_{k-1} + \lambda_k(r_k -H_{k-1})$ for $k = 1, \ldots, n$.  Then $H_n$ is the optimal expected revenue under this preference ordering, and an upper bound on any other preference ordering.

We now compute the optimal expected revenue when the preference order goes in the opposite direction.   Set $G^n_0 = 0$ and for $k \in \{1,\ldots,n\}$ do  $G^n_k := G^n_{k-1} + \lambda_{n+1-k}(r_{n+1-k} - G^n_{k-1})^+$ and define $G_n: = G^n_n$. Then $G_n$ is the optimal expected revenue under this ordering, and a lower bound under any other preference ordering. Clearly $G_n \leq H_n$. We now show that
$$\frac{H_n}{G_n} \leq  2 ~~~\forall~~~n \geq 1.$$
Let $f(1) := 0$, $f(k) : = (1- \lambda_k)(\lambda_{k-1} + f(k-1))$ for $k > 1$. Define also the sequence $\hat{f}(1) = 0$, $\hat{f}(k) := (1- \lambda_{k+1})(\lambda_k+ \hat{f}(k-1))$ for $k > 1$. Notice that both $f$ and $\hat{f}$ are of the same form except for a shift in the index. We next show by induction that
$$f(k) =  \hat{f}(k-1) + \prod_{j=2}^k(1-\lambda_j) \lambda_1 ~~~~\forall~~~~k \geq 2.$$
Moreover, $f(k) \leq 1 - \lambda_k$, and $\hat{f}(k) \leq 1 - \lambda_{k+1}$ for all $k$.
For $k = 2$, the left hand side is $f(2) = (1-\lambda_2)\lambda_1$ while the right hand side is $\hat{f}(1) + (1-\lambda_2)\lambda_1 = (1- \lambda_2)\lambda_1$ so the result holds for $k = 2$. Suppose the result holds for $k$, so $f(k) =  \hat{f}(k-1) + \prod_{j=2}^k(1-\lambda_j) \lambda_1$.
Then
\bean
f(k+1)  & = &  (1-\lambda_{k+1}) (\lambda_k + f(k))\\
& = &  (1-\lambda_{k+1}) (\lambda_k + \hat{f}(k-1) + \prod_{j=2}^k(1-\lambda_j) \lambda_1)\\
& = &  (1-\lambda_{k+1}) (\lambda_k + \hat{f}(k-1)) + \prod_{j=2}^{k+1}(1-\lambda_j) \lambda_1) \\
& = & \hat{f}(k) +  \prod_{j=2}^{k+1}(1-\lambda_j)\lambda_1,
\eean
where the first equality follows from the definition of $f$, the second from the inductive hypothesis, the third by the distributive property, and the fourth from the definition of $\hat{f}(k) = (1-\lambda_{k+1}) (\lambda_k + \hat{f}(k-1))$. This completes the inductive step.

We next show by induction that $f(k) \leq  1 - \lambda_k$. This holds for $k = 1$ as $0 \leq 1 - \lambda_1$ follows from $\lambda_1 \leq 1$. Suppose it holds for $k-1$ so that $f(k-1) \leq 1 - \lambda_{k-1}$, then $f(k) = (1-\lambda_k)(\lambda_{k-1} + f(k-1)) \leq 1 - \lambda_k$ completing the proof. A similar argument applies for $\hat{f}$ and is omitted.

We will show that by induction that $H_n/G_n \leq 1 + f(n) \leq 2 - \lambda_n$. Assume the result holds for all instances of size $n-1$. Then the result holds for the instance of size $n-1$ that excludes product 1. Letting $\hat{H}_{n-1}$ and $\hat{G}_{n-1}$ be the optimal expected revenues corresponding to the best and the worst orderings of $\{2, \ldots,n\}$, we see that the inductive hypothesis correspinds to $\hat{H}_{n-1}/\hat{G}_{n-1} \leq 1 + \hat{f}(n-1)$. Consider now the instance of size $n$ that includes product 1 under the assumption that $r_1 \geq G_n$.  Then $r_1 \geq G_n \geq  G^n_{n-1}$ and $G_n = (1-\lambda_1)G^n_{n-1} + \lambda_1 r_1$.

Since $\hat{G}_{n-1} = G^n_{n-1}$ we see that
$$G_n=  (1- \lambda_1)\hat{G}_{n-1} + \lambda_1 r_1.$$
We can also write $H_n$ in terms of $\hat{H}_{n-1}$ resulting in
$$H_n=\hat{H}_{n-1} + \prod_{j=2}^n(1-\lambda_j) \lambda_1 r_1.$$
Since the ratio
$$\frac{H_n}{G_n} = \frac{\hat{H}_{n-1} + \prod_{j=2}^n(1-\lambda_j) \lambda_1 r_1}{(1- \lambda_1)\hat{G}_{n-1} + \lambda_1 r_1}$$
is decreasing in $r_1 \geq \hat{G}_{n-1}$, it follows that it is maximized by setting $r_1 = \hat{G}_{n-1}$. For this choice of $r_1$ we have

$$\frac{H_n}{G_n} = \frac{\hat{H}_{n-1} + \prod_{j=2}^n(1-\lambda_j) \lambda_1 \hat{G}_{n-1}}{\hat{G}_{n-1}}.$$

By the inductive hypothesis, $\hat{H}_{n-1} \leq (1+\hat{f}(n-1))\hat{G}_{n-1}$. Then,
\bean
H_n & \leq &  \left[1 + \hat{f}(n-1) + \prod_{j=2}^n(1-\lambda_j) \lambda_1\right]\hat{G}_{n-1}\\
& = & (1+f(n))G_n
\eean
where the equality follows the relationship between $\hat{f}$ and $f$, and $\hat{G}_{n-1} = G_n$. Dividing by $G_n$ we obtain that $H_n/G_n \leq 1 + f(n) \leq 2 - \lambda_n$. This completes the proof under the assumption that $r_1 \geq G_n$.

Suppose now that the assumption $r_1 \geq  G_n$ fails, and let $l = \arg\min \{i: r_i \geq G_n\}$, so $r_1 \leq \ldots \leq r_{l-1} < G_n \leq r_l$. We will now argue that the ratio $H_n/G_n$ can be upper bounded by an instance with $n$ products and $l = 1$.  Since the computation of $H_n$ involves the terms $r_1 \leq r_2 \leq \ldots \leq r_{l-1} < G_n$ and $H_n$ is increasing in all of the $r_i$s, it follows that $H_n$ is maximized, without changing $G_n$, by setting $r_1 = r_2 = r_{l-1} = G_n$ .  More formally, consider the modified instance where $\bar{\lambda}_i = \lambda_i, i \in \{1,\ldots,n\}$, $\bar{r}_i = r_i, i \geq l$ and $\bar{r}_i = G_n$ for all $i < l$. Clearly $H_n \leq \bar{H}_n$ while $G_n = \bar{G}_n$.
Therefore
$$\frac{H_n}{G_n} \leq \frac{\bar{H}_n}{\bar{G}_n} \leq 1 + f(n) \leq 2-\lambda_n$$
where the second inequality follows from the case $l = 1$. We remark that Wentao Lu\footnote{Personal Communication.} at HKUST independently obtained a different proof for the same bound the same week we did.

\end{proof}
Here we briefly summarize the key steps of the proof, and provide some managerial insights. First,  we show that if the preference orders are unchanged, then the optimal  expected revenue of the RCS model is increasing in the attention probabilities. Second, we show that if the products are  sorted in increasing order of  the $r_i$s, we can identify the best and the worst preference orderings in terms of the optimal expected revenue under traditional assortment optimization. The best preference order is $1 \prec 2 \prec \ldots \prec n$, and the worst is $n \prec n-1 \prec \ldots \prec 1$.  This result is intuitive as consumers have preferences for the products with the highest revenues in the former preference order and for the lowest in the latter. This suggest that refined assortment optimization is about reducing the attention probability of low revenue products to make them less attractive in the  preference ordering. We end this section by showing that the factor 2 bound is tight fort the RCS model.

\begin{proposition}
	\label{prop_RCS_tight_example}
	Theorem \ref{thm_RCS_factor_2} is tight.
\end{proposition}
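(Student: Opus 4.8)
The plan is to exhibit an explicit one-parameter family of two-product RCS instances, indexed by a small $\epsilon > 0$, for which $\bar{\cR}/\cR^* \to 2$ as $\epsilon \to 0$; together with the upper bound of Theorem~\ref{thm_RCS_factor_2} this establishes tightness. The guiding idea, consistent with the managerial reading offered after the proof, is to force the low-revenue product to be the most preferred one in the base instance, so that the TAOP is stuck serving it, and then to use refinement to demote it below the high-revenue product while keeping its attention probability intact.

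Concretely, I would take $n = 2$ with revenues $r_1 = \epsilon$ and $r_2 = 1$. In the base instance I set the mean utilities so that product $1$ is the most preferred (choose $u_1 > u_2$), with attention probabilities $\lambda_1 = 1$ and $\lambda_2 = \epsilon$. First I compute $\cR^*$ by enumerating the three assortments: since product $1$ is seen first and has $\lambda_1 = 1$, it captures the customer whenever it is offered, so each of $\{1\}$, $\{2\}$, and $\{1,2\}$ yields exactly $\epsilon$, and hence $\cR^* = \epsilon$.

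Next I describe the refinement. Reducing $u_1$ to $\tilde{u}_1 = u_1 + \ln x_1$ with $x_1 < e^{u_2 - u_1} < 1$ pushes $\tilde{u}_1$ below $u_2$, demoting product $1$ below product $2$, so the refined preference order becomes $1 \prec 2$. Offering both products then gives
\[
\bar{\cR} \;\geq\; \lambda_2 r_2 + \lambda_1(1-\lambda_2) r_1 \;=\; \epsilon + (1-\epsilon)\epsilon \;=\; \epsilon(2-\epsilon),
\]
so that $\bar{\cR}/\cR^* \geq 2 - \epsilon$. Combined with $\bar{\cR} \leq 2\cR^*$ from Theorem~\ref{thm_RCS_factor_2}, this pins down $\bar{\cR} = \epsilon(2-\epsilon)$ and yields the claimed limit.

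The main obstacle is realizability: the computation of $\bar{\cR}$ uses $\lambda_1 = 1$ after product $1$ has been demoted below product $2$, even though $\lambda_2 = \epsilon$, which appears to clash with the assumption that attention is non-decreasing in utility. I would resolve this by stressing that the monotonicity assumption is \emph{per product} --- each product's attention is non-decreasing in its own utility --- and imposes no cross-product constraint tying attention to utility across products. Taking product $1$'s attention to be the constant $1$ for every utility level (trivially non-decreasing) makes the demotion free of any attention loss, so the refined configuration is a legitimate RCS instance and the displayed bound is attained. I would close by noting that the same effect can be staged with $n$ products along the extremal profile $\lambda_1 = 1$, $\lambda_n \to 0$ identified inside the proof of Theorem~\ref{thm_RCS_factor_2}, where the ratio approaches $2 - \lambda_n$, but the two-product family already suffices for tightness.
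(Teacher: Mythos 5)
Your proof is correct and is essentially the paper's own construction up to relabeling and rescaling: a two-product instance where the low-revenue, high-attention product is initially the preferred one (forcing $\cR^*$ down), and refinement demotes it below the high-revenue product while its attention probability is left unchanged, giving a ratio of $2-\epsilon$. Even the realizability issue is handled in morally the same way --- the paper uses a footnote taking the attention reduction to zero in the limit, while you argue attention can be constant in own-utility --- so the two arguments coincide in substance (your side remark that the theorem ``pins down'' $\bar{\cR}=\epsilon(2-\epsilon)$ is inessential and slightly off, since the upper bound is $2\epsilon$, but tightness only needs the lower bound on the ratio).
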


\begin{proof}
Consider a RCS model with $n=2$, $\lambda_1=\epsilon$, $\lambda_2=1$, $r_1=\epsilon^{-1}$ and $r_2=1$ for some small $\epsilon>0$. Suppose that, if products are shown in full, consumers prefer product 2 over product 1: $1 \prec 2$. Thus, under the TAOP, the firm expected revenue is then $R^*= \max\{\lambda_2r_2 + (1-\lambda_2)\lambda_1 r_1, \lambda_1 r_1\} = \max\{1,1\}=1$.
Now suppose that if the firm slightly decreases the utility of product 1,  and as a consequence the consumer preference order is reversed to $2 \prec 1$ but the attention probabilities remain the same\footnote{Alternatively one can imagine that product 1 attention probability is reduced by some very small positive number and take the limit to zero.}. Then, we have that $\hat{R}^* = \lambda_1r_1 + (1-\lambda_1)\lambda_2r_2 = 1 + (1-\epsilon)1 = 2 - \epsilon$. The tightness follows by taking the limit $\epsilon \to 0$.
\end{proof}


\section{Refined Revenue-Ordered Heuristics}
\label{sec:heuristics}

In this section we propose three heuristics for the RAOP that are refined versions of the revenue-ordered heuristic. They all enjoy the guarantees from Theorem~\ref{thm:upr} as they are at least as good as the best revenue-ordered assortment. The first heuristic consists of finding the best revenue-ordered assortment in which the utility of the lowest revenue product offered is optimized. We called this, the \emph{Revenue Ordered Heuristic with one partial product}.

\begin{center} {\bf RO1: Revenue-Ordered Heuristic with one partial product} \end{center}
\begin{enumerate}
	\item For each $i \in N$ let $x^*_i: = \arg \max R(e^{i-1} + x_i e_i)$ and compute $R(e^{i-1} + x^*_ie_i)$.
	\item Let $k = \arg\max_{i \in N} R(e^{i-1} + x^*_i)$.
	\item Return $e^{k-1} + x^*_ke_k$ and $R(e^{k-1} + x^*_ke_k)$.
\end{enumerate}

This heuristic involves solving $n$ optimization problems over a single real variable over $[0,1]$. It is clear that $R(e^{k-1} + x^*_ke_k) \geq R(e^k)$ for all $k \in N$ so its performance is at least as good as the RO heuristic. The second heuristic builds upon the first by potentially adding more partial products.

\begin{center} {\bf RO2: Revenue Ordered Heuristic with several partial products} \end{center}
\begin{enumerate}
	\item For each $i \in N$, compute $x^*_{ii} = \arg \max R(e^{i-1} + x_ie_i)$
	\item For $k = i+1, \ldots, n$ compute $x^*_{ik} = \arg \max R(e^{i-1} + x^*_{ii}e_i + \ldots + x^*_{i,l-1}e_{l-1} + x_{ik} e_k).$
	\item Let $k = \arg \max_{i \in N} R(e^{i-1} + x^*_{ii}e_i + \ldots + x^*_{in} e_n)$
	\item Return $e^{k-1} + x^*_{kk}e_k + \ldots + x^*_{kn} e_n$ and  $R(e^{k-1} + x^*_{kk}e_k + \ldots + x^*_{kn} e_n)$.
\end{enumerate}

This heuristic requires to solve $O(n^2)$ optimization problems over a single $[0,1]$ variable. Clearly $R(e^{k-1} + x^*_{kk}e_k + \ldots + x^*_{kn} e_n) \geq R(e^{k-1} + x^*_k e_k)$ so this heuristic is the never worse than the RO1.

The third heuristic is similar to the previous one, but starting from a revenue ordered assortment, instead of evaluating sequentially in a revenue decreasing way among the candidates for being considered as partial, we greedily select the product that by partially adding it to the current solution, increases the revenue the most.  We call this extension the \emph{Revenue-Ordered Greedy-Heuristic with several partial products)}. This heuristic requires to solve $O(n^3)$ optimization problems over a single $[0,1]$ variable, and for that reason we refer to it as \textbf{RO3}. Observe that while is clear that this heuristic is better than RO1 it can be worst than RO2 since the greedy nature might make the algorithm to select a product in the sub-routine that in hindsight was not the best choice. The performance of these heuristics is reported in Section~\ref{sub:numerical_MMNL}.

\subsection{Upper Bounds for the LC-MNL model}\label{sub:upper_LC-MNL}

In this section we propose two easy to compute upper bounds for the LC-MNL. The LC-MNL is important because every random utility model can be approximated as accurately as desired by a LC-MNL. Since there is no polynomial algorithm to compute $\cR^*$ we present here two easy to compute upper-bounds that can help evaluate heuristics. The first one is the solution to the p-RAOP, $\bar{\cR} \leq \bar{\cR}^p = \sum_{j \in M} \theta_j \cR_j$ which is easy to compute since a RO assortment is optimal for each market segment. The second upper bound is based on writing the objective function of $\cR^*$ as a fractional program, which results in a bi-linear program that can be linearized to obtain the following linear programming upper bound:


$$\bar{\cR}^u := \max_{x,y,z} \sum_{j \in M} \theta_j \sum_{i \in N} r_i v_{ij} z_{ij}$$
subject to the constraints
\begin{eqnarray*}
	v_{0j}y_j + \sum_{i \in N}v_{ij}z_{ij} &=& 1~~~~\forall ~~j \in M \\
	z_{ij} &\geq& x_i~~~~\forall~~~i \in N,~~\forall~~j \in M \\
	z_{ij} & \leq & x_i/v_{0j}~~~\forall~~i \in N,~~\forall~~j \in M \\
	z_{ij} &\leq& x_i + y_j  -1 ~~~\forall~~i \in N,~~\forall~~j \in M\\
	z_{ij} &\geq& (x_i -1) /v_{0j} + y_j ~~~\forall~~i \in N,~~\forall~~j \in M \\
	0 &\leq& x_i \leq 1~~~\forall~~i \in N \\
	0& \leq& y_j~~~\forall ~~j \in M \\
	0 &\leq & z_{ij} ~~~\forall~~i \in N,~~\forall~~j \in M
\end{eqnarray*}

If the solution satisfies $z_{ij} = x_i y_j$ for all $i \in N, j \in M$ then the upper bound yields  an optimal solution. This is similar to the mixed-integer linear formulation studied in \citep{bront2009column,  mendez2014branch, feldman2015assortment}, but in our case $x_i$ is also allowed to take continuous values. \cite{csen2018conic} proposed a conic formulation  for the LC-MNL for the TAOP which relies, as we do above, on McCormick inequalities  \citep{mccormick1976computability}.

\newpage
\section{Numerical Results}\label{sub:numerical_MMNL}
In this section, we evaluate the performance of the heuristics proposed in Section~\ref{sec:heuristics}. 
Since any random utility model can be approximate  by a LC-MNL to any arbitrary precision \citep{mcfadden200mixednourl,chierichetti2018discrete} it is natural to study the performance of the heuristic on the LC-MNL.  We used synthetic instances of the LC-MNL for our experiments testing instances with $n \in \{5,10,15,50,100\}$ and $m \in \{2,5,10, 50,100\}$. To generate the product utilities, we applied a procedure used by \cite{berbeglia2018comparative} which depends on a parameter $\epsilon>0$ as follows:  for each customer class $j \in M$, we generate a random permutation $\sigma_j$ over the products including the outside option, that we associate to class $j$. The exponentiated utility of product $i \in N \cup\{0\}$ for a type $j$ customer is  modeled as:	$v_{ij} = \epsilon^{\sigma_j^{-1}(i)}$, 	where $\sigma_m^{-1}(i)$ denotes the position of product $i$ in the permutation $\sigma_j$.  The value of $\epsilon \in [0,1]$ measures how similar or different the utilities are for consumers of a specific class. A value of $\epsilon$ close to zero, implies different utilities are very different, whereas a value of $\epsilon$ close to $1$ makes them very similar.

Product prices are sampled from different distributions: uniform distribution $\mathcal{U}(1,10)$, Normal distribution $\mathcal{N}(50, 10)$, a multi-modal distribution, which is the result of adding two normally distributed variables with their means separated enough relative to their variance, exponential distribution $\text{Exp}(1)$ and skewed Normal $\text{Sk}(100)$.  Additionally, we consider the effect that a wider price spectrum could have on the performance of the heuristics. Since the bound obtained in Theorem~\ref{thm:upr}  is dependent on the ratio $\alpha = r_n/r_1$,  we scale the results obtained from the sampled prices, and set the minimum price to match $1$, and then the maximum price to be either $5$, $10$ or $100$ resulting in $\alpha \in \{.01, .1, .2\}$. 


For each instance we first assign prices at random to products, without considering their relation with product attractiveness across segments. We also consider instances where random prices are aligned with the exponentiated utilities and aligned in the opposite order, so products with higher prices have lower exponentiated utilities. 

For each  instance, we report the average (and the maximum) expected revenue for the three RAOP heuristics and the optimal revenue $\cR^*$ for the TAOP obtained through enumeration\footnote{We report this only for values of $n < 50$, as the solution by enumeration was too computationally expensive to compute.}. We present the performance of the heuristics relative to the best revenue-ordered assortment. Cells marked with an asterisk indicate that the heuristic outperformed the optimal solution for the TAOP, suggesting that polynomial heuristics for the RAOP can often improve over the TAOP without having to solve a difficult combinatorial problem.

\newcommand\fsize{.915}    

\begin{figure}[H]
	\centering
	\includegraphics[width=\fsize\textwidth]{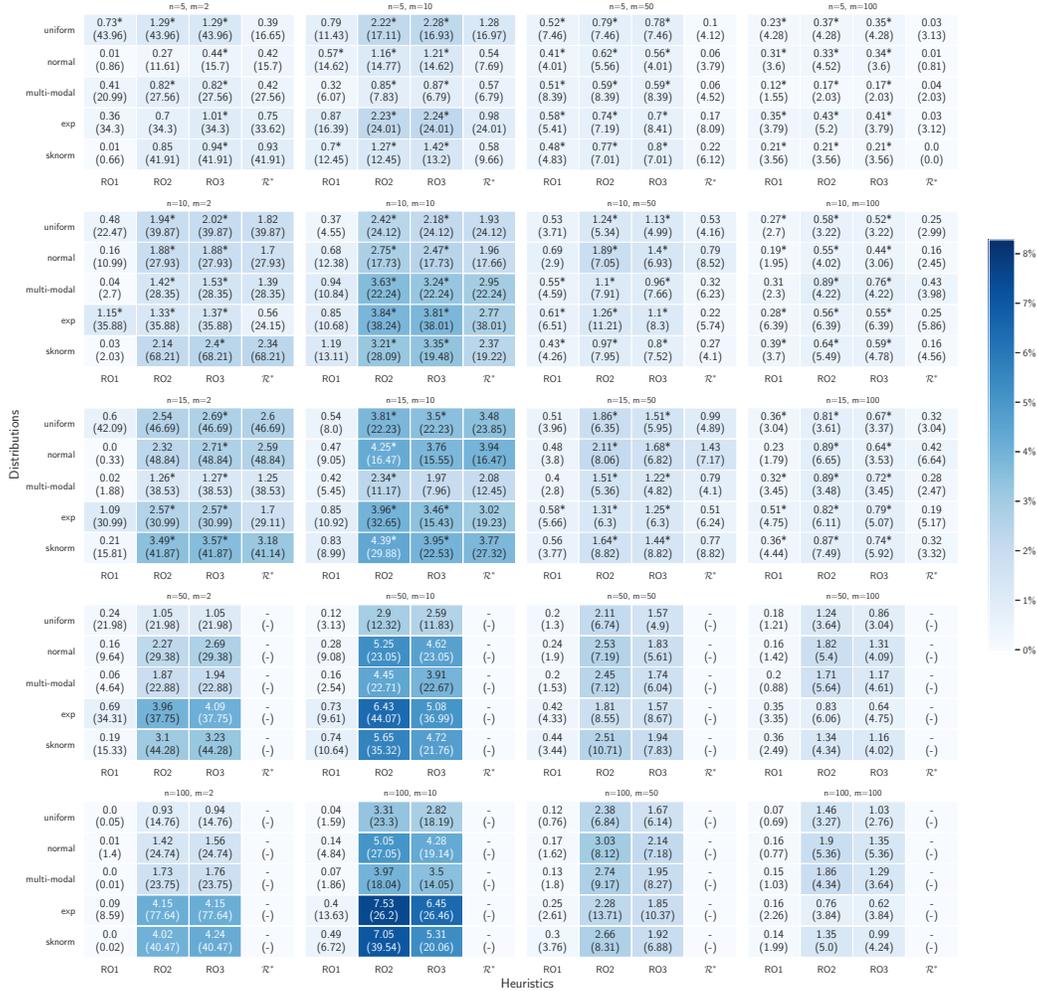}
	\caption{Mean (max) performance of heuristics and the optimal of TAOP relative to RO. For each heuristic, the value is highlighted with an asterisk if it outperformed the optimal of TAOP on average. }\label{fig1}
\end{figure}

Figure~\ref{fig1} shows the performance relative to the RO as a baseline for a set-up with high variation among products attractiveness ($\epsilon =0.01$) and $\alpha = 0.01$. This set up has tremendous variability in the utilities and a large ratio of the largest to the lowest value of $r$.  The heuristics consistently found refined assortments for the RAOP that outperform optimal assortments for the TAOP.  RO2 and RO3 present a growing gap with respect to RO as the number of products grows, while RO1 stays close to RO as it can only refine one product. When the number of customer types is large, the gap between RO and $\cR^*$ tightens, while the other heuristics significantly outperform $\cR^*$ on average, with the gap growing in $n$. The heuristics thrive in settings where there is higher concentration of products with high revenue since there are more opportunities of refining lower revenue products that cannibalize  products with higher revenues.  In addition to the observations concerning the average performance of the heuristic, we remark that the maximum over the instances can significantly outperform the RO heuristic and are often equal or better than the maximum for the TAOP.

\begin{figure}[H]
	\centering
	\includegraphics[width=\fsize\textwidth]{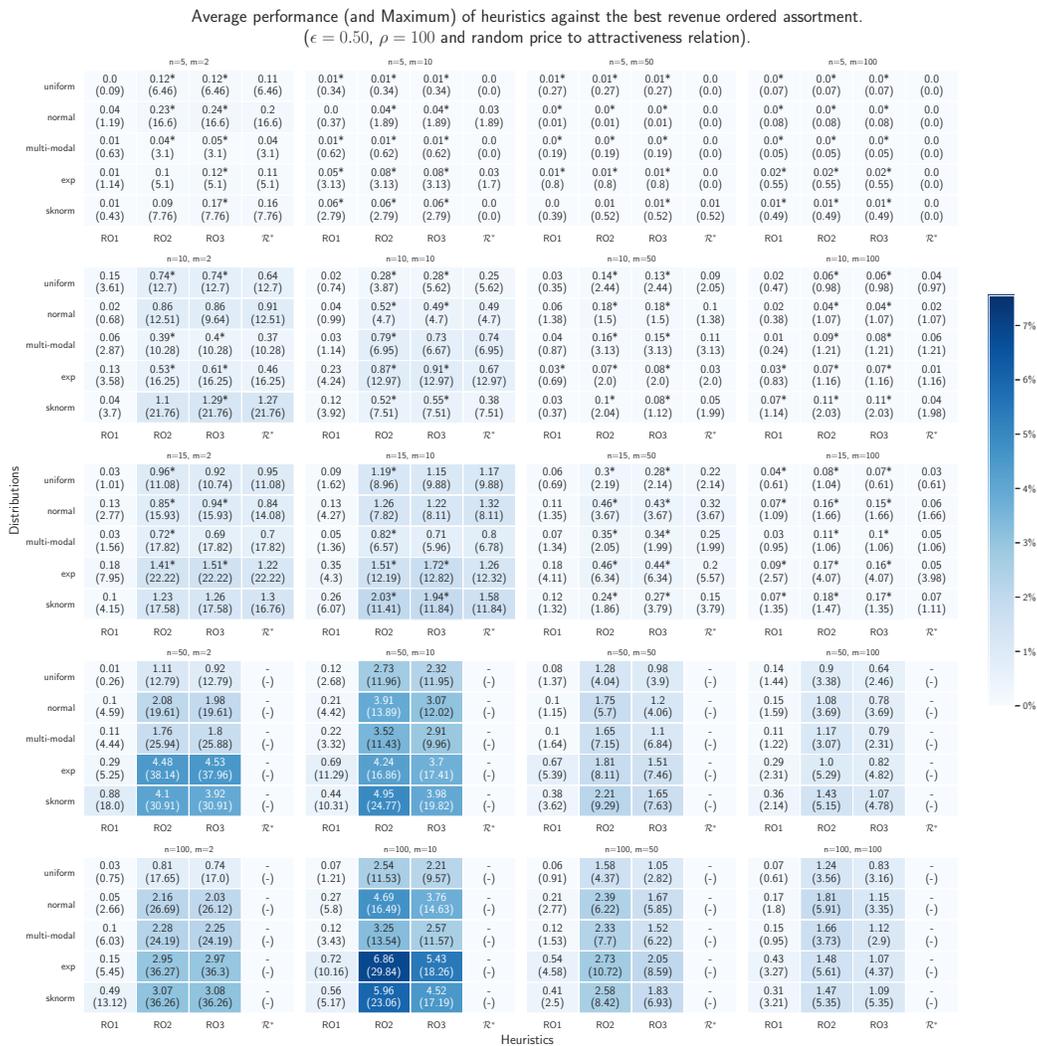}
	\caption{Mean (max) performance of heuristics and the optimal of TAOP relative to RO. For each heuristic, the value is highlighted with an asterisk if it outperformed the optimal of TAOP on average. }\label{fig2}
\end{figure}

Figure~\ref{fig2} shows results for cases with lower variation of the exponentiated utilities ($\epsilon = 0.5$). The heuristics  outperform the $\cR^*$  in most cases,  but the gap is tighter.  For higher values of $m$, the average and maximum uplift of the heuristics is slightly higher than in Figure~\ref{fig1}. For lower values of $m$, we see an increasing benefit of using the heuristics.


At a coarser level, Table~\ref{tb:avg_performance} shows that RO2 and RO3 perform on average better than $\cR^*$.  RO3 performs better with $5$ products, but for higher values of $n$ is on average dominated by RO2, which dominates $\cR^*$ across the board. We also evaluated the effect on market market share  and consumer surplus. RO2 had on average a mild increase of $0.38\%$ in market relative to $\cR^*$.  The difference in market share decreased with $n$ and grows with $m$. Note also that RO3, being a greedy heuristic, over-performed RO2 for $m=2$  across the board. This can be explained as it is less likely that the greedy heuristic goes wrong when $m$ is small. RO2 is more conservative doing modifications in strict order by revenue, which protects the current solution being assessed causing a better performance overall for $m>2$. RO2 is also computationally faster and recommended except when $m$ is small.

\begin{table}[h]
	\centering
	\label{tb:avg_performance}
	\caption{Average performance of heuristics and the optimal TAOP against revenue ordered assortment, across all instances, while varying $n$ and $m$. The maximum on each row is highlighted with an asterisk.}
	\begin{tabular}{|l|l|c|c|c|c|}
		\hline
		$n$ & $m$ & RO1    & RO2    & RO3    & $\cR^*$   \\ \hline
		5 & 2 & 0.12\% & 0.36\% & 0.39\%* & 0.27\% \\
		5 & 10 & 0.36\% & 0.84\% & 0.85\%* & 0.48\% \\
		5 & 50 & 0.2\% & 0.31\%* & 0.31\%* & 0.08\% \\
		5 & 100 & 0.11\% & 0.15\%* & 0.15\%* & 0.02\% \\
		10 & 2 & 0.15\% & 0.78\% & 0.83\%* & 0.7\% \\
		10 & 10 & 0.43\% & 1.78\%* & 1.62\% & 1.38\% \\
		10 & 50 & 0.28\% & 0.78\%* & 0.68\% & 0.36\% \\
		10 & 100 & 0.18\% & 0.44\%* & 0.39\% & 0.17\% \\
		15 & 2 & 0.14\% & 0.99\% & 1.03\%* & 0.91\% \\
		15 & 10 & 0.41\% & 2.32\%* & 2.07\% & 2.01\% \\
		15 & 50 & 0.3\% & 1.11\%* & 0.91\% & 0.66\% \\
		15 & 100 & 0.2\% & 0.61\%* & 0.5\% & 0.3\% \\
		50 & 2 & 0.1\% & 1.02\% & 1.04\%* & - \\
		50 & 10 & 0.28\% & 3.27\%* & 2.78\% & -\\
		50 & 50 & 0.22\% & 1.91\% *& 1.42\% & -\\
		50 & 100 & 0.17\% & 1.25\%* & 0.89\% & - \\
		100 & 2 & 0.06\% & 0.88\% & 0.91\%* & - \\
		100 & 10 & 0.21\% & 3.33\%* & 2.83\% & - \\
		100 & 50 & 0.16\% & 2.13\%* & 1.54\% & - \\
		100 & 100 & 0.13\% & 1.46\%* & 0.99\% & -\\
		\hline
	\end{tabular}	
\end{table}

In terms of consumer surplus it is clear that the heuristics will produce a higher consumer surplus than $\cR^*$ whenever they agree on the set of products that are offered in full. This is because consumers benefit from increased product availability under the RAOP. This agreement happens with relatively high frequency for the
RO2 and RO3  ($69.97\%$ and $68.83\%$ respectively for $n\leq 15$)  where their output offers the $\cR^*$ set in full. We also noticed that while this trend increases with the number of customer segments, it decreased with the number of products. If the firm is determined to use the RAOP to improve consumer surplus, it can first solve the $\cR^*$ heuristically or by enumeration and then determine which products left out can be brought in partially.

\section{Concluding remarks}

In this paper we proposed the refined assortment optimization problem and demonstrated that it can substantially improve revenues for the firm. Moreover, if the refinement is used only on products excluded by the traditional assortment optimization problem, then the expected consumer surplus goes up resulting in a win-win policy for the firm and its customers. We also developed refined revenue-ordered heuristics and showed that their worst case performance relative to the personalized refined assortment optimization has the same performance guarantees that were previously known relative to the traditional assortment optimization problem. For special demand classes, such as the multinomial logit, and the random consideration set model we showed that the benefits from personalized assortments relative to traditional assortment optimization have a factor of 1 and 2, respectively. Some interesting future research directions are: (1) quantifying the benefits of using RAOP with respect to TAOP in other important choice models such as the Exponomial and the Markov Chain model; (2) a detailed study of the RAOP in the case where $\Theta_i$ is a discrete set for each $i \in N$; (3) a best response analysis when a firm uses refined assortment optimization under competition; and (4) the study of the RAOP with cardinality constraints.

%

\newpage
\bibliographystyle{plainnat}
\bibliography{biblio}

\begin{thebibliography}{39}
\providecommand{\natexlab}[1]{#1}
\providecommand{\url}[1]{\texttt{#1}}
\expandafter\ifx\csname urlstyle\endcsname\relax
  \providecommand{\doi}[1]{doi: #1}\else
  \providecommand{\doi}{doi: \begingroup \urlstyle{rm}\Url}\fi

\bibitem[Aggarwal et~al.(2008)Aggarwal, Feldman, Muthukrishnan, and
  P{\'a}l]{aggarwal2008sponsored}
Gagan Aggarwal, Jon Feldman, Shanmugavelayutham Muthukrishnan, and Martin
  P{\'a}l.
\newblock Sponsored search auctions with markovian users.
\newblock In \emph{International Workshop on Internet and Network Economics},
  pages 621--628. Springer, 2008.

\bibitem[Anderson et~al.(1992)Anderson, Depalma, and Thisse]{anderson1992nourl}
Simon~P. Anderson, Andre Depalma, and Jacques-Francois Thisse.
\newblock \emph{Discrete Choice Theory of Product Differentiation}.
\newblock The MIT Press, October 1992.
\newblock ISBN 026201128X.

\bibitem[Ben-Akiva and Lerman(1985)]{ben1985nourl}
M.E. Ben-Akiva and S.R. Lerman.
\newblock \emph{Discrete Choice Analysis: Theory and Application to Travel
  Demand}.
\newblock MIT Press series in transportation studies. MIT Press, 1985.
\newblock ISBN 9780262022170.

\bibitem[Berbeglia and Joret(2020)]{berbeglia2020assortment}
Gerardo Berbeglia and Gwena{\"e}l Joret.
\newblock Assortment optimisation under a general discrete choice model: A
  tight analysis of revenue-ordered assortments.
\newblock \emph{Algorithmica}, 82\penalty0 (4):\penalty0 681--720, 2020.

\bibitem[Berbeglia et~al.(2018)Berbeglia, Garassino, and
  Vulcano]{berbeglia2018comparative}
Gerardo Berbeglia, Agust{\'\i}n Garassino, and Gustavo Vulcano.
\newblock A comparative empirical study of discrete choice models in retail
  operations.
\newblock \emph{Available at SSRN 3136816}, 2018.

\bibitem[Blanchet et~al.(2016)Blanchet, Gallego, and Goyal]{blanchet2016markov}
Jose Blanchet, Guillermo Gallego, and Vineet Goyal.
\newblock A markov chain approximation to choice modeling.
\newblock \emph{Operations Research}, 64\penalty0 (4):\penalty0 886--905, 2016.

\bibitem[Bront et~al.(2009)Bront, M\'endez-D\'iaz, and
  Vulcano]{bront2009column}
Juan Jos\'e~Miranda Bront, Isabel M\'endez-D\'iaz, and Gustavo Vulcano.
\newblock A column generation algorithm for choice-based network revenue
  management.
\newblock \emph{Operations Research}, 57\penalty0 (3):\penalty0 769--784, 2009.

\bibitem[Chen et~al.(2019{\natexlab{a}})Chen, Gallego, Gao, and
  Li]{chen2019sequential}
Ningyuan Chen, Guillermo Gallego, Pin Gao, and Anran Li.
\newblock A sequential recommendation-selection model.
\newblock \emph{Available at SSRN 3451727}, 2019{\natexlab{a}}.

\bibitem[Chen et~al.(2019{\natexlab{b}})Chen, Gallego, and Tang]{chen2019use}
Ningyuan Chen, Guillermo Gallego, and Zhuodong Tang.
\newblock The use of binary choice forests to model and estimate discrete
  choice models.
\newblock \emph{Available at SSRN 3430886}, 2019{\natexlab{b}}.

\bibitem[Chierichetti et~al.(2018)Chierichetti, Kumar, and
  Tomkins]{chierichetti2018discrete}
Flavio Chierichetti, Ravi Kumar, and Andrew Tomkins.
\newblock Discrete choice, permutations, and reconstruction.
\newblock In \emph{Proceedings of the Twenty-Ninth Annual ACM-SIAM Symposium on
  Discrete Algorithms}, pages 576--586. SIAM, 2018.

\bibitem[Craswell et~al.(2008)Craswell, Zoeter, Taylor, and
  Ramsey]{Craswell_2008nourl}
Nick Craswell, Onno Zoeter, Michael Taylor, and Bill Ramsey.
\newblock An experimental comparison of click position-bias models.
\newblock In \emph{Proceedings of the 2008 International Conference on Web
  Search and Data Mining}, WSDM '08, pages 87--94, New York, NY, USA, 2008.
  ACM.
\newblock ISBN 978-1-59593-927-2.
\newblock \doi{10.1145/1341531.1341545}.

\bibitem[D{\'e}sir et~al.(2020)D{\'e}sir, Goyal, and Zhang]{desir2020}
Antoine D{\'e}sir, Vineet Goyal, and Jiawei Zhang.
\newblock Capacitated assortment optimization: Hardness and approximation.
\newblock \emph{Available at SSRN 2543309}, 2020.

\bibitem[Farias et~al.(2013)Farias, Jagabathula, and
  Shah]{farias2013nonparametric}
Vivek~F Farias, Srikanth Jagabathula, and Devavrat Shah.
\newblock A nonparametric approach to modeling choice with limited data.
\newblock \emph{Management Science}, 59\penalty0 (2):\penalty0 305--322, 2013.

\bibitem[Fata et~al.(2019)Fata, Ma, and Simchi-Levi]{fata2019multi}
Elaheh Fata, Will Ma, and David Simchi-Levi.
\newblock Multi-stage and multi-customer assortment optimization with inventory
  constraints.
\newblock \emph{Available at SSRN 3443109}, 2019.

\bibitem[Feldman and Topaloglu(2015)]{feldman2015assortment}
Jacob Feldman and Huseyin Topaloglu.
\newblock Assortment optimization under the multinomial logit model with nested
  consideration sets.
\newblock \emph{Working paper}, 2015.

\bibitem[Feldman et~al.(2018)Feldman, Zhang, Liu, and
  Zhang]{feldman2018customer}
Jacob Feldman, Dennis Zhang, Xiaofei Liu, and Nannan Zhang.
\newblock Customer choice models versus machine learning: Finding optimal
  product displays on alibaba.
\newblock \emph{Available at SSRN 3232059}, 2018.

\bibitem[Feng et~al.(2017)Feng, Li, and Wang]{feng2017relation}
Guiyun Feng, Xiaobo Li, and Zizhuo Wang.
\newblock On the relation between several discrete choice models.
\newblock \emph{Operations Research}, 65\penalty0 (6):\penalty0 1516--1525,
  2017.

\bibitem[Feng et~al.(2007)Feng, Bhargava, and Pennock]{feng2007implementing}
Juan Feng, Hemant~K Bhargava, and David~M Pennock.
\newblock Implementing sponsored search in web search engines: Computational
  evaluation of alternative mechanisms.
\newblock \emph{INFORMS Journal on Computing}, 19\penalty0 (1):\penalty0
  137--148, 2007.

\bibitem[Flores et~al.(2019)Flores, Berbeglia, and
  Van~Hentenryck]{flores2019assortment}
Alvaro Flores, Gerardo Berbeglia, and Pascal Van~Hentenryck.
\newblock Assortment optimization under the sequential multinomial logit model.
\newblock \emph{European Journal of Operational Research}, 273\penalty0
  (3):\penalty0 1052--1064, 2019.

\bibitem[Gallego and Li(2017)]{gallego2017attention}
Guillermo Gallego and Anran Li.
\newblock Attention, consideration then selection choice model.
\newblock Working paper, available at SSRN: 2926942, 2017.

\bibitem[Gallego et~al.(2020)Gallego, Li, Truong, and
  Wang]{gallego2020approximation}
Guillermo Gallego, Anran Li, Van-Anh Truong, and Xinshang Wang.
\newblock Approximation algorithms for product framing and pricing.
\newblock \emph{Operations Research}, 68\penalty0 (1):\penalty0 134--160, 2020.

\bibitem[Guadagni and Little(1983)]{Guadagni1983nourl}
Peter~M. Guadagni and John D.~C. Little.
\newblock A logit model of brand choice calibrated on scanner data.
\newblock \emph{Marketing Science}, 2\penalty0 (3):\penalty0 203--238, 1983.

\bibitem[Hofbauer and Sandholm(2002)]{hofbauer2002global}
Josef Hofbauer and William~H Sandholm.
\newblock On the global convergence of stochastic fictitious play.
\newblock \emph{Econometrica}, 70\penalty0 (6):\penalty0 2265--2294, 2002.

\bibitem[Kempe and Mahdian(2008)]{Kempe_2008nourl}
David Kempe and Mohammad Mahdian.
\newblock A cascade model for externalities in sponsored search.
\newblock In \emph{Proceedings of the 4th International Workshop on Internet
  and Network Economics}, WINE '08, pages 585--596, Berlin, Heidelberg, 2008.
  Springer-Verlag.
\newblock ISBN 978-3-540-92184-4.
\newblock \doi{10.1007/978-3-540-92185-1_65}.

\bibitem[Kök et~al.(2005)Kök, Fisher, and Vaidyanathan]{KokAssortment}
{A. Gürhan} Kök, Marshall~L. Fisher, and Ramnath Vaidyanathan.
\newblock Assortment planning: Review of literature and industry practice.
\newblock In \emph{Retail Supply Chain Management}. Kluwer, 2005.

\bibitem[Liu et~al.(2020)Liu, Ma, and Topaloglu]{liu2020assortment}
Nan Liu, Yuhang Ma, and Huseyin Topaloglu.
\newblock Assortment optimization under the multinomial logit model with
  sequential offerings.
\newblock \emph{INFORMS Journal on Computing}, 2020.

\bibitem[Luce(1959)]{luce1959}
Robert~Duncan Luce.
\newblock \emph{Individual Choice Behavior}.
\newblock John Wiley and Sons, 1959.

\bibitem[Manzini and Mariotti(2014)]{manzini2014stochastic}
Paola Manzini and Marco Mariotti.
\newblock Stochastic choice and consideration sets.
\newblock \emph{Econometrica}, 82\penalty0 (3):\penalty0 1153--1176, 2014.

\bibitem[McCormick(1976)]{mccormick1976computability}
Garth~P McCormick.
\newblock Computability of global solutions to factorable nonconvex programs:
  Part i—convex underestimating problems.
\newblock \emph{Mathematical programming}, 10\penalty0 (1):\penalty0 147--175,
  1976.

\bibitem[McFadden(1978)]{mcfadden1978modeling}
Daniel McFadden.
\newblock Modeling the choice of residential location.
\newblock \emph{Transportation Research Record}, \penalty0 (673), 1978.

\bibitem[McFadden and Train(2000)]{mcfadden200mixednourl}
Daniel McFadden and Kenneth Train.
\newblock Mixed {MNL} models for discrete response.
\newblock \emph{Journal of Applied Econometrics}, 15\penalty0 (5):\penalty0
  447--470, 2000.
\newblock \doi{10.1002/1099-1255(200009/10)15:5<447::AID-JAE570>3.0.CO;2-1}.

\bibitem[M{\'e}ndez-D{\'\i}az et~al.(2014)M{\'e}ndez-D{\'\i}az, Miranda-Bront,
  Vulcano, and Zabala]{mendez2014branch}
Isabel M{\'e}ndez-D{\'\i}az, Juan~Jos{\'e} Miranda-Bront, Gustavo Vulcano, and
  Paula Zabala.
\newblock A branch-and-cut algorithm for the latent-class logit assortment
  problem.
\newblock \emph{Discrete Applied Mathematics}, 164:\penalty0 246--263, 2014.

\bibitem[Plackett(1975)]{Plackett1975}
R.~Plackett.
\newblock The analysis of permutations.
\newblock \emph{Journal of The Royal Statistical Society Series C-applied
  Statistics}, 24:\penalty0 193--202, 1975.

\bibitem[{\c{S}}en et~al.(2018){\c{S}}en, Atamt{\"u}rk, and
  Kaminsky]{csen2018conic}
Alper {\c{S}}en, Alper Atamt{\"u}rk, and Philip Kaminsky.
\newblock A conic integer optimization approach to the constrained assortment
  problem under the mixed multinomial logit model.
\newblock \emph{Operations Research}, 66\penalty0 (4):\penalty0 994--1003,
  2018.

\bibitem[Spence(1976)]{spence1976product}
Michael Spence.
\newblock Product selection, fixed costs, and monopolistic competition.
\newblock \emph{The Review of economic studies}, 43\penalty0 (2):\penalty0
  217--235, 1976.

\bibitem[Talluri and Van~Ryzin(2004)]{talluri2004revenue}
Kalyan Talluri and Garrett Van~Ryzin.
\newblock Revenue management under a general discrete choice model of consumer
  behavior.
\newblock \emph{Management Science}, 50\penalty0 (1):\penalty0 15--33, 2004.

\bibitem[Thurstone(1927)]{thurstone1927law}
Louis~L Thurstone.
\newblock A law of comparative judgment.
\newblock \emph{Psychological review}, 34\penalty0 (4):\penalty0 273, 1927.

\bibitem[Tversky and Kahneman(1989)]{tversky1989rational}
Amos Tversky and Daniel Kahneman.
\newblock Rational choice and the framing of decisions.
\newblock In \emph{Multiple criteria decision making and risk analysis using
  microcomputers}, pages 81--126. Springer, 1989.

\bibitem[Vulcano et~al.(2010)Vulcano, van Ryzin, and Chaar]{Vulcano2010}
Gustavo Vulcano, Garrett van Ryzin, and Wassim Chaar.
\newblock Om practice---choice-based revenue management: An empirical study of
  estimation and optimization.
\newblock \emph{Manufacturing \& Service Operations Management}, 12\penalty0
  (3):\penalty0 371--392, 2010.
\newblock ISSN 1526-5498.
\newblock \doi{10.1287/msom.1090.0275}.
\newblock URL \url{http://dx.doi.org/10.1287/msom.1090.0275}.

\end{thebibliography}

\appendix
\renewcommand{\thesection}{\arabic{section}}

\end{document}